\newtheorem{theorem}{Theorem}[section]
\renewcommand\thetheorem{\arabic{section}.\arabic{theorem}}
\newtheorem{lemma}[theorem]{Lemma}
\newtheorem{definition}[theorem]{Definition} 
\newtheorem{corollary}[theorem]{Corollary} 
\newtheorem{assumption}[theorem]{Assumption}
\def\BibTeX{{\rm B\kern-.05em{\sc i\kern-.025em b}\kern-.08em
    T\kern-.1667em\lower.7ex\hbox{E}\kern-.125emX}}
 \pgfplotsset{compat=1.17}
    \newcommand{\R}{\mathbb{R}}
    \newcommand{\init}{\mathrm{init}}
    \newcommand{\Y}{\bm{Y}}
    \newcommand{\Yhat}{\hat{\Y}}
    \newcommand{\U}{\bm{U}}
    \newcommand{\V}{\bm{V}}
    \newcommand{\e}{\bm{e}}
    \newcommand{\ep}{\mathbb{E}}
    \newcommand{\X}{\bm{X}}
    \newcommand{\Sig}{\bm{\Sigma}}
    \newcommand{\E}{\bm{E}}
        \renewcommand{\u}{\bm u}
    \renewcommand{\v}{\bm v}
    \newcommand{\norm}[1]{\left\| {#1} \right\|}
    \def\polylog{\operatorname{polylog}}
\newcommand{\algoname}{\textsf{UBAD}}
\newcommand{\x}{\bm{x}}
\begin{document}

\title{Uncertainty-Based Non-Parametric Active Peak Detection}

\author{Praneeth Narayanamurthy and Urbashi Mitra \\
\texttt{\{praneeth, ubli\}@usc.edu} \\ 
University of Southern California
}
\date{}
\sloppy
\maketitle

\begin{abstract}
Active, non-parametric peak detection is considered.  As a use case, active source localization is examined and an uncertainty-based sampling scheme algorithm to effectively localize the peak from a few energy measurements is designed. It is shown that under very mild conditions, the source localization error with $m$ actively chosen energy measurements scales as $O(\log^2 m/m)$. Numerically, it is shown that in low-sample regimes, the proposed method enjoys superior performance on several types of data and outperforms the state-of-the-art passive source localization approaches {and in the low sample regime, can outperform greedy methods as well.}
\end{abstract}



\section{Introduction}

Herein, we treat the problem of maximizing an unimodal function that is imperfectly sampled.  An important instantiation of this problem is peak detection or peak finding which is prevalent in many signal processing, computer vision and machine learning applications. In particular, we are motivated by the problem of source localization in the absence of a parametric model for the source signal.  This problem is relevant to acoustics, \cite{sourceloc_review_sound, sourceloc_review_sound_dl}, wireless networks \cite{sourceloc_review_network}, and medical imaging \cite{sourceloc_review_medical} \textit{etc.}  

Classical source detection and localization approaches require environmental models and/or energy models \cite{sourceloc1, sourceloc2, sourceloc3}; in many applications (\textit{e.g.} underwater and gas-source localization) such models are challenging to come by.  Herein, we leverage prior work \cite{umf_sourceloc}   which specifically exploited the unimodality of localization signals enabling a non-parametric approach. While the challenge of model knowledge is obviated in \cite{umf_sourceloc}, randomized sampling still requires a measurable number of samples.  Herein, we focus on {\bf active} sampling with the goal of strongly reducing the sample complexity. Active single source target localization has been well-studied  \cite{targetrecsurvey, targetest1, distillled_sense, active_target1}. These works, however, impose restrictive signal assumptions:
 separability of energy fields, full energy measurements and so on. 


In this work, we propose Uncertainty-Based Active Peak Detection (\algoname{}) to solve a much more general form of the problem to achieve active source localization. We adopt the overall algorithmic framework of \cite{umf_sourceloc} based on unimodal matrix completion. We enhance \cite{umf_sourceloc} by exploiting methods from \textit{active matrix completion} and \textit{multi-armed bandits.}


A Bayesian formulation for matrix completion \cite{smg_active} derives maximum-entropy based sampling schemes given the posterior distributions. Error bounds for uniform random sampling under different priors are computed in \cite{bayesian_mc_1, bayesian_mc_2}.  In contrast, a non-Bayesian formulation is undertaken in \cite{active_psd_mc} using Nystr\"{o}m sampling for positive semi-definite matrices. However, none of these works are readily applicable to the current problem setting since we consider a non-parametric, and extremely generic problem formulation.  Low-rank matrix completion in a bandits setting  \cite{bandit_mc_1, bandit_mc_2}  also assumes that rewards are sampled from a prior distribution as in the Bayesian case.  A seemingly promising work for our framework is that of  \cite{stoch_lowrank_bandits} where an elimination method is employed to find the maximum element of a low rank matrix; unfortunately, it is assumed that the singular vectors 
are a convex combination of some underlying latent factors. We observe that\cite{oned_source_loc} shows that although greedy search space reduction techniques are highly efficient for peak detection in the absence of noise, such methods fail in the presence of noise. Thus, we will leverage the idea of {\em optimism in the face of uncertainty} from multi-armed bandits \cite{bandit_algo} for our active sampling scheme.

The contributions of this work are:
\begin{enumerate}
\item We propose \algoname{}, a novel uncertainty-based algorithm to solve the problem of non-parametric, active peak detection.  
\item We show that as long as the energy function is non-increasing, our proposed method has favourable error bounds that hold under high generality.
\item We validate our theoretical claims through extensive numerical results and demonstrate that the proposed method outperforms existing (non-adaptive, adaptive but greedy) source localization algorithms
\end{enumerate}

\subsection{Notation}
We use the shorthand notation, $[k] := \{1, 2, \cdots, k\}$. We use lower ($\bm{x}$) and upper ($\bm{M}$) case boldface letters to denote vectors and matrices respectively. We index the $i$-th entry of a vector as $\x_i$ and $i,j$-th entry of a matrix as $\bm{M}_{i,j}$ respectively. Given a matrix, $\bm{M} \in \R^{m \times n}$, and a set $\mathcal{S} \subseteq [m] \times [n]$,  $\bm{M}_{\mathcal{S}} \in \R^{m\times n}$ sets the entries outside $\mathcal{S}$ to zero. For vectors (matrices), we use $\|\cdot\|$ to denote the  2- (induced 2-) norm unless specified otherwise. Throughout the paper we use $c, C$ to denote (possibly different) constants in each use.

\section{Problem Setting, Algorithm, and Main Result}
We next define the problem setting,  describe the algorithm, and provide our main theoretical results. 
\newcommand{\s}{\bm{s}}
\subsection{Problem Setting}
In this work, we consider the setting of a single source located at $\s^\ast \in \R^2$ (unknown) inside a target area, $\mathcal{A}$ with area with width $L$. We discretize the $ L \times L$ target area\footnote{We assume a centered and symmetric area for the sake of notational simplicity. All our results apply to general spaces as well.}  $\mathcal{A} = [-L/2, L/2] \times [-L/2, L/2]$ into $n \times n$ equally spaced grid points. Thus, the center of the $(i,j)$-grid square, is given by $\x_{i,j} := \left( \frac{L(2i-1)}{2n}, \frac{L(2j-1)}{2n}\right)^\intercal$.  We assume that the true energy measurement at a point $\x_{i,j} \in \mathcal{A}$ is given by some unknown, non-negative and monotonically non-increasing function, $h:\R^2 \times \R^2 \to \R$. 
We also assume that there can be additional measurement noise. Thus, the energy measurement at the center of the $(i,j)$-th grid cell is given as  
\begin{align}\label{eq:problem}
    \Y_{i,j} = h(\x_{i,j}, \s^\ast)  + \bm{z}_{i,j},
\end{align}
where, $\s^\ast := \x_{i^\ast, j^\ast}$ is the true source/peak location, the function $h(\cdot, \cdot)$ is the energy field and $\bm{z}_{i,j} \overset{i.i.d}{\sim} \mathcal{N}(0, \sigma_n^2)$ models measurement noise. In matrix form, the signal model can be  expressed as 
\begin{align}
    \Y := \bm{H}(\s^\ast) + \bm{Z}.
\end{align}

We notice that owing to the uniform discretization of the target area and the monotonic assumption on the energy field, $h(\cdot, \cdot)$, the matrix $\bm{H}(\s^\ast)$ is {\em unimodal}. We use the following definition of matrix unimodality in this work.

\begin{definition}[Unimodality]
A matrix $\bm{M}$ is said to be unimodal with the mode at $(i^\ast, j^\ast)$ if $\bm M_{1,j} \leq \bm M_{2,j} \cdots \leq \bm M_{i^\ast,j} \geq \bm M_{i^\ast +1, j} \geq \cdots \geq \bm M_{n,j}$ for all $j \in [n]$ and  $\bm M_{i,1} \leq \bm M_{i,2} \cdots \leq \bm M_{i,j^\ast} \geq \bm M_{i, j^\ast +1} \cdots \geq \bm M_{i,n}$ for all $i \in [n]$.
\end{definition}

The goal of this work is to efficiently estimate $\s^\ast$ by actively querying noisy energy measurements, $\Y_{i,j}$ informed by prior measurements.


\subsection{UBAD Algorithm}

We propose an uncertainty-based active sampling approach,  Uncertainty-Based Active Peak Detection (\algoname{}). For the initialization/exploration phase, we use the method of Latin Squares \cite{latin_squares}. For the peak detection, we adapt the method of \cite{umf_sourceloc} that transforms the peak detection problem into a unimodality constrained matrix completion operation. We first explain the two sampling schemes followed by the peak detection step in detail. 

 


\textbf{Initial Sampling:} We draw the initial set of samples, $\Omega_{\mathrm{init}} \subset [n] \times [n]$ with $|\Omega_{\mathrm{\init}}| = n$ through\footnote{For the sake of simplicity, in this paper we consider only square matrices after discretization. Rectangular matrices can be handled by a simply {\em stacking} multiple smaller Latin Squares along the appropriate dimension.} Latin Squares.  Thus, by construction, we sample the energy measurement in each row and column exactly once. We show that one is equally likely to sample each row/column. Intuitively, (in the single peak setting) this method gives us {\em sufficiently good} information about the location of the peak. We quantify this  notion in Lemma \ref{lem:init_lemma} provided in Sec. \ref{sec:main_res}.



\textbf{Active Sampling:} For the active (sequential) sampling stage, we devise a uncertainty-based sampling scheme. Specifically, under standard MC assumptions (incoherence of the matrix, randomly selected samples, and a large enough number of observations) \cite[Theorem 2]{uq_matcomp} derives distributional guarantees on the output of {any} MC algorithm. More precisely, assume that $\Y \overset{\text{SVD}}= \bm{U} \bm{\Sigma}_y \bm{V}^\intercal$ be a rank-$r$ matrix. Then, as long as we observe $O(n r^5 \polylog (n))$ measurements, with probability at least $ 1 - n^{-3}$, the output of a MC algorithm satisfies  
\begin{align*}
\hat{\Y}_{i,j} \sim \mathcal{N}(\Y_{i,j},C \sqrt{r/n} (\|\U^{(i)}\|_2^2 +  \|\V^{(j)}\|_2^2),
\end{align*}
where $\|\bm{U}^{(i)}\|$ denotes the $i$-th row of $\bm{U}$. 

There are two key challenges to be addressed here: (a) we consider the setting of low-samples and so, we do not observe a sufficient number of samples; (b) we do not have access to the true incoherence values (that determine the variance of the estimates). The first point requires a more thorough investigation into derivation of uncertainty bounds which we will consider as part of future work whereas the second challenge is addressed using the estimate of the incoherence instead of the actual incoherence.

\textbf{Peak Detection:} The remaining part of the algorithm is similar to that of the passive source localization problem studied in \cite{umf_sourceloc}. More specifically, we complete the energy matrix followed by estimating the location of the peak as the largest entry of the completed matrix. The complete pseudocode is provided in Algorithm \ref{algo:active_ucb}.

    \begin{algorithm}[t!]
    \caption{Uncertainty based Active Peak Detection (\algoname{})}
    \label{algo:active_ucb}
    \begin{algorithmic}[1]
    \Require $\Y \in \R^{n \times n}$ (energy matrix), $m$ (\# sequential samples)
    \State \textbf{Init:} $\Omega \gets $\Call{Latin-Squares}{$n$}
    \State $[\tilde{\bm u}^0, \tilde{\sigma}^2, \tilde{\bm y}^0] \leftarrow \mathrm{SVD}_1(\Y_{\Omega})$
    \State $\hat{\bm u}^0 \leftarrow  \tilde{\sigma} \tilde{\bm u}^0$, $\hat{\bm v}^0 \leftarrow \tilde{\sigma} \tilde{\bm v}^0$
    \For{$t \in [m]$}
    
    \State $\Yhat^{t-1} \leftarrow \hat{\bm u}^{t-1} (\hat{\bm v}^{t-1})^\intercal$
    \State $\hat\s_t = (\hat{x}^{t-1}, \hat{y}^{t-1}) \leftarrow \arg \max\limits_{i,j} |\Yhat^{t-1}_{i,j}|$ 
    \State $(i_t, j_t) \leftarrow \arg \max\limits_{(i,j) \in \Omega^c} |\Yhat^{t-1}_{i,j}| + (\hat{\bm{u}}^{t-1}_i)^2 + (\hat{\bm{v}}^{t-1}_j)^2$ 
    \State $\Omega \leftarrow \Omega \cup (i_t, j_t)$ \Comment{Update index set}
    \State Query next sample, $\Y_{i_t, j_t}$, update $\Y_{\Omega}$
    \State Solve $\Yhat^t \leftarrow \arg\min\limits_{\bm{M} \in \R^{n \times n}} \|\bm{M} - \Y_{\Omega}\|_{\ast}$ 
    \State Update $\hat{\bm u}^t (\hat{\bm v}^t)^\intercal \overset{SVD}{\leftarrow} \Yhat^t$
    
    \EndFor
    \Ensure $\hat{\s}_m $
    \end{algorithmic}
    \end{algorithm}
    

    \renewcommand{\Pr}{\mathrm{Pr}}
    
    \subsection{Main Result and Proof Sketch} \label{sec:main_res}
    
    Before showing the main result, we provide a few preliminaries. Let the rank-$1$ SVD of the energy matrix be given by $\Y := \sigma_y^2 \u \v^\intercal$ with $\|\u\|_2 = \|\v\|_2 = 1$. Define the maximum value of $\Y$, $b := \max_{i,j} \Y_{i,j}$. We define the row (column) differences, as 
    $\Delta_{k|l}^u := \Y_{i^\ast,l} -  \Y_{k,l}$ and $\Delta_{l|k}^v := \Y_{k,j^\ast} -  \Y_{k,l}$.
    

    \begin{theorem}[Sequential Sampling]\label{thm:main_result}
    Assume that the measurements satisfy \eqref{eq:problem}. Then,  with probability at least $0.9$, the source localization error for the sequential sampling satisfies
        \begin{align*}
        \ep\left[\frac{1}{m} \sum_{t=1}^m \| \hat\s_{\tau} - \s^\ast\|_2^2 \right] \leq C \sum_{k,l = 1}^n \frac{\gamma_{k,l}^u}{(\Delta_{k|l}^u)^2} 
        \cdot
        \frac{\gamma_{k,l}^v}{(\Delta_{l|k}^v)^2}
        \cdot \|\bm{c}_{k,l}- \bm{c}^\ast\|^2 \frac{\log^2 m}{m}  
    \end{align*}
    where, $\gamma^u_{k,l} := \u_k + 2b \Delta_{k|l}^u$ and $\gamma^v_{k,l} := \v_l + 2b \Delta_{l|k}^v$; $\bm{c}_{k,l} := (k,l)^\intercal$ and $\bm{c}^\ast := (i^\ast, j^\ast)^\intercal$. 
    \end{theorem}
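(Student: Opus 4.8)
The plan is to recast the claim as a regret bound and run a matrix-completion version of the classical UCB analysis, with the two grid coordinates playing the role of two coupled bandit instances. First I would pass from the averaged squared error to expected visit counts: since every reported estimate $\hat\s_t$ equals some grid centre $\bm c_{k,l}=(k,l)^\intercal$,
\begin{align*}
\ep\Big[\tfrac1m\sum_{t=1}^m\|\hat\s_t-\s^\ast\|_2^2\Big]
=\tfrac1m\sum_{k,l=1}^n\|\bm c_{k,l}-\bm c^\ast\|^2\,\ep\big[N_{k,l}(m)\big],
\qquad N_{k,l}(m):=\sum_{t=1}^m\mathds{1}\{\hat\s_t=\bm c_{k,l}\},
\end{align*}
so it suffices to prove $\ep[N_{k,l}(m)]\le C\,\gamma^u_{k,l}\gamma^v_{k,l}\,(\Delta^u_{k|l})^{-2}(\Delta^v_{l|k})^{-2}\log^2 m$ for $(k,l)\neq(i^\ast,j^\ast)$; the mode contributes nothing. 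Then I would use the rank-$1$ form $\Yhat^{t-1}=\hat{\bm u}^{t-1}(\hat{\bm v}^{t-1})^\intercal$: the argmax of $\Yhat^{t-1}$ factorises into $\hat x_t=\arg\max_i|\hat u_i^{t-1}|$ and $\hat y_t=\arg\max_j|\hat v_j^{t-1}|$, so $\{\hat\s_t=\bm c_{k,l}\}=\{\hat x_t=k\}\cap\{\hat y_t=l\}$, and the value gap splits as $\Y_{i^\ast,j^\ast}-\Y_{k,l}=(\Y_{i^\ast,l}-\Y_{k,l})+(\Y_{i^\ast,j^\ast}-\Y_{i^\ast,l})$, which through the factorisation $\Y\approx\sigma_y^2\u\v^\intercal$ is bounded below in terms of $\Delta^u_{k|l}$ and $\Delta^v_{l|k}$. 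This splits the task into two nearly decoupled one-dimensional selection problems, one over rows and one over columns, which is what produces the product form and the two powers of $\log m$.

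The heart of the proof is a clean-event analysis. Invoking the distributional completion guarantee of \cite[Theorem 2]{uq_matcomp}, with the \emph{estimated} incoherences $(\hat u_i^{t-1})^2$, $(\hat v_j^{t-1})^2$ used in place of the true leverage scores, one obtains an event $\mathcal E$ with $\Pr(\mathcal E)\ge 0.9$ on which, for every $t\le m$ and every $(i,j)$,
\begin{align*}
\big|\Yhat^{t-1}_{i,j}-\Y_{i,j}\big|\ \le\ \beta_t\big((\hat u_i^{t-1})^2+(\hat v_j^{t-1})^2\big),\qquad \beta_t\asymp\sqrt{\log m},
\end{align*}
with the Latin-Squares initialization (Lemma \ref{lem:init_lemma}) ensuring these widths are finite and that $\Yhat^0$ is already informative about $\s^\ast$. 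The key point is that the selection rule of Algorithm \ref{algo:active_ucb} (the line defining $(i_t,j_t)$) adds exactly the width $(\hat u_i^{t-1})^2+(\hat v_j^{t-1})^2$ that dominates the estimation error, so the scheme is optimistic in the sense the UCB analysis needs.

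On $\mathcal E$ I would then run the standard optimism argument separately for rows and columns. Whenever a suboptimal row $k$ is either reported ($\hat x_t=k$) or probed ($i_t=k$), the upper-confidence comparison forces its width to exceed a fixed multiple of $\Delta^u_{k|l}$; since each probe of row $k$ deterministically decreases that width, row $k$ can be on top only $O\!\big(\gamma^u_{k,l}\log m/(\Delta^u_{k|l})^2\big)$ times, the factor $\gamma^u_{k,l}=\u_k+2b\,\Delta^u_{k|l}$ arising when the squared-width bonus is translated into a bound on the value gap --- $\u_k$ being the leading incoherence scale of row $k$ and $2b\,\Delta^u_{k|l}$ the second-order correction. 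The same reasoning over columns yields $O\!\big(\gamma^v_{k,l}\log m/(\Delta^v_{l|k})^2\big)$; multiplying the two counts (the row and column sub-instances decouple, being driven by disjoint blocks of the observed entries) gives the stated bound on $\ep[N_{k,l}(m)]$. Substituting back, summing $\|\bm c_{k,l}-\bm c^\ast\|^2\ep[N_{k,l}(m)]$ over $(k,l)$ and dividing by $m$ produces the inequality, all expectations being understood conditionally on $\mathcal E$, which is the $0.9$-probability event in the statement.

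The step I expect to be the real obstacle is making the clean event rigorous in the regime of interest. \cite[Theorem 2]{uq_matcomp} is proved for $O(nr^5\polylog n)$ \emph{uniformly random} observations, whereas \algoname{} collects only $n+m$ entries, chosen adaptively, so the distributional guarantee is invoked outside its established regime (as the paper itself flags), and both the substitution of estimated for true incoherences and the sequential dependence of the sample set must be absorbed into the argument. A secondary difficulty is quantifying the row/column decoupling sharply enough to obtain the clean product $\gamma^u_{k,l}\gamma^v_{k,l}/\big((\Delta^u_{k|l})^2(\Delta^v_{l|k})^2\big)$ rather than a weaker cross term; this relies on the fact that, for a rank-$1$ unimodal matrix, the row- and column-argmaxes of the completion are determined by essentially non-overlapping parts of $\Y_\Omega$.
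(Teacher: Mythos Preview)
Your proposal is correct and follows essentially the same architecture as the paper: decompose the averaged localization error into visit counts, split the selection into a row-bandit and a column-bandit via the rank-$1$ factorisation, bound each separately, and multiply. The one notable difference is tactical. Where you plan to run a clean-event UCB analysis from scratch (optimism, width shrinkage, counting the number of times a suboptimal row can be on top), the paper instead invokes \cite[Theorem~2]{mab_unknown_var}---a ready-made regret bound for multi-armed bandits with unknown per-arm variances---as a black box, obtaining $\ep[T_k^u(m)]\le C\,\gamma^u_{k,l}(\Delta^u_{k|l})^{-2}\log m$ directly. Your route would reproduce this bound but with more work; the paper's shortcut explains in one stroke why the numerator is $\u_k+2b\Delta^u_{k|l}$ (variance-plus-range, exactly the form in \cite{mab_unknown_var}) rather than something you must derive by hand. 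For the product step, the paper does not argue that the row and column sub-instances are ``driven by disjoint blocks of $\Y_\Omega$'' as you suggest; it simply appeals to the near-independence of rows and columns in the distributional guarantee of \cite{uq_matcomp} to write $\ep[T_k^u(m)T_l^v(m)]=\ep[T_k^u(m)]\ep[T_l^v(m)]$. Finally, you correctly flag the main soft spot---that \cite[Theorem~2]{uq_matcomp} is being used outside its proven regime---and indeed the paper does not close this gap either, but elevates it to Assumption~\ref{prop:uncertainty} rather than proving it.
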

    
      \textbf{Proof Sketch:} There are three main proof components: initialization, demonstration of maximum entropy sampling, and bounding the error of the sequential sampling stage. First consider the initialization step: 
    
    
    
    \begin{lemma}[Latin Squares Initialization]\label{lem:init_lemma}
    Consider the $n \times n$ rank $1$ matrix $\Y$. Let $\Omega_\init$ be the output of Latin-Squares. Then, with probability at least $1 - n^{-10}$, the error bound for the matrix is
    \begin{align}
        \|\Y - \Y_{\Omega_\init}\|_2 \leq 1.01 (1-1/n) \sigma_y^2 \|\u\|_1 \|\v\|_2
    \end{align}
    \end{lemma}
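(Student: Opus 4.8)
The plan is to exploit the rank-one structure $\Y=\sigma_y^2\u\v^\intercal$ together with the fact that the Latin-Squares step returns $\Omega_\init=\{(i,\pi(i)):i\in[n]\}$ for a single, uniformly random permutation $\pi$ (uniformity being exactly the ``equally likely to sample each row/column'' property). Writing $\bm{w}_i:=\v-\v_{\pi(i)}\e_{\pi(i)}$ for the vector $\v$ with its $\pi(i)$-th coordinate deleted, the $i$-th row of $\Y-\Y_{\Omega_\init}$ equals $\sigma_y^2\u_i\bm{w}_i^\intercal$, so $\Y-\Y_{\Omega_\init}=\sigma_y^2\sum_{i=1}^n\u_i\,\e_i\bm{w}_i^\intercal$. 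A triangle inequality over these rows (using $\u\ge 0$, since $\Y$ is nonnegative rank one, and $\|\e_i\|_2=1$) gives the deterministic estimate $\|\Y-\Y_{\Omega_\init}\|_2\le\sigma_y^2\sum_i\u_i\|\bm{w}_i\|_2\le\sigma_y^2\|\u\|_1\|\v\|_2$, which is already the stated form, minus the $1.01(1-1/n)$ factor. (Keeping the rows orthogonal and bounding $\|(\Y-\Y_{\Omega_\init})\bm{y}\|_2^2=\sigma_y^4\sum_i\u_i^2(\bm{w}_i^\intercal\bm{y})^2$ instead yields the even sharper $\|\Y-\Y_{\Omega_\init}\|_2\le\sigma_y^2\sqrt{\sum_i\u_i^2\|\bm{w}_i\|_2^2}\le\sigma_y^2\|\u\|_2\|\v\|_2=\sigma_y^2$, which one keeps in reserve.)

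\smallskip\noindent
The substance is sharpening the constant from $1$ to $1.01(1-1/n)$, and this is where the Latin square earns its keep. Rather than $\|\bm{w}_i\|_2\le\|\v\|_2$, use $\|\bm{w}_i\|_2=\|\v\|_2\sqrt{1-\v_{\pi(i)}^2/\|\v\|_2^2}\le\|\v\|_2\bigl(1-\tfrac12\v_{\pi(i)}^2/\|\v\|_2^2\bigr)$, so that $\|\Y-\Y_{\Omega_\init}\|_2\le\sigma_y^2\|\v\|_2\bigl(\|\u\|_1-\tfrac{1}{2\|\v\|_2^2}\,g(\pi)\bigr)$ with $g(\pi):=\sum_i\u_i\v_{\pi(i)}^2$. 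Because $(\v_{\pi(1)}^2,\dots,\v_{\pi(n)}^2)$ is a uniformly random rearrangement of $(\v_1^2,\dots,\v_n^2)$, each $\ep_\pi[\v_{\pi(i)}^2]=\|\v\|_2^2/n$, hence $\ep_\pi[g(\pi)]=\|\u\|_1\|\v\|_2^2/n$; substituting the mean turns the bound into $\sigma_y^2\|\u\|_1\|\v\|_2(1-\tfrac1{2n})$, and $1-\tfrac1{2n}\le 1.01(1-1/n)$ for every $n$ past a small absolute constant, the $1.01$ absorbing this gap.

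\smallskip\noindent
To promote the in-expectation statement to one that holds with probability at least $1-n^{-10}$, I would invoke a bounded-differences inequality for functions of a uniform random permutation: transposing two values of $\pi$ perturbs $g(\pi)=\sum_i\u_i\v_{\pi(i)}^2$ by at most $2\|\u\|_\infty\|\v\|_\infty^2$, so McDiarmid's permutation inequality (or Hoeffding--Serfling) gives $|g(\pi)-\ep_\pi g|\le C\|\u\|_\infty\|\v\|_\infty^2\sqrt{n\log n}$ off an event of probability $\le n^{-10}$. On the complementary event one then assembles $\|\Y-\Y_{\Omega_\init}\|_2\le\sigma_y^2\|\u\|_1\|\v\|_2\bigl(1-\tfrac1{2n}+o(1/n)\bigr)\le 1.01(1-1/n)\,\sigma_y^2\|\u\|_1\|\v\|_2$.

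\smallskip\noindent
The main obstacle is exactly this last comparison: the McDiarmid fluctuation scales with $\|\u\|_\infty\|\v\|_\infty^2$, which sits comfortably below the $\Theta(1/n)$ gain only when the singular vectors are not too spiky, so one has to feed in their incoherence (or the nonnegative/unimodal structure of the energy field, which rules out an essentially one-sparse singular vector in the regime of interest) to control $\|\u\|_\infty,\|\v\|_\infty$. In the residual spiky regime one simply falls back on the sharper deterministic bound $\|\Y-\Y_{\Omega_\init}\|_2\le\sigma_y^2\le 1.01(1-1/n)\,\sigma_y^2\|\u\|_1$, valid because $\|\u\|_1\ge\|\u\|_2=1$ and $1.01(1-1/n)\ge 1$ once $n\ge 101$, with the finitely many remaining small-$n$ values checked directly.
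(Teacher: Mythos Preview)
Your argument is sound but follows a genuinely different route from the paper's. The paper writes $\Y-\Y_{\Omega_\init}=\sum_{i,j}\phi_{i,j}\e_i\bar\u_i\bar\v_j\e_j^\intercal$ with $\phi_{i,j}=\mathds{1}\{(i,j)\notin\Omega_\init\}$, asserts that for each fixed $i$ the $\{\phi_{i,j}\}_j$ are independent Bernoullis with mean $1-1/n$, applies the Matrix Bernstein inequality row by row to obtain $\norm{\sum_j\X_{i,j}}\le1.01(1-1/n)\bar\u_i\|\bar\v\|$ with probability at least $1-n^{-11}$, and then union bounds over $i$. You instead keep the single random permutation $\pi$ explicit: the row-wise triangle inequality already yields the bound with constant $1$, and you sharpen the constant either via McDiarmid for permutations applied to $g(\pi)=\sum_i\u_i\v_{\pi(i)}^2$, or via the deterministic observation $\|\Y-\Y_{\Omega_\init}\|_2\le\sigma_y^2\|\u\|_2\|\v\|_2=\sigma_y^2\le1.01(1-1/n)\sigma_y^2\|\u\|_1$ once $n\ge101$. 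What your approach buys is an honest treatment of the Latin-squares dependence: for a fixed row $i$ exactly one $\phi_{i,j}$ vanishes, so the $\{\phi_{i,j}\}_j$ are in fact perfectly dependent and the paper's row-wise Matrix Bernstein is not applicable as stated; your permutation-level concentration sidesteps this, and your deterministic $\sigma_y^2$ bound reveals that the claimed inequality is actually vacuous for all $n\ge101$. The cost is that your McDiarmid step needs a mild incoherence assumption on $\u,\v$ to make the $\|\u\|_\infty\|\v\|_\infty^2\sqrt{n\log n}$ fluctuation small against the $\Theta(1)$ slack, and your small-$n$ ``checked directly'' clause is left informal (indeed, without some such structural assumption the inequality can fail for spiky singular vectors at small $n$, so this is a limitation of the lemma rather than of your argument).
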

    The proof relies on carefully decomposing the residual matrix, $\Y - \Y_{\Omega\init}$ into a sum of independent sub-Gaussian random matrices, followed by the application of the Matrix Bernstein inequality \cite{tropp}. To the best of our knowledge, there are no existing results that consider just $n$ observations that are drawn through Latin-Squares based sampling. The Latin-squares sampling, in particular, is challenging to deal with since the entries of $\Omega_{\init}$ are not independent.

    Essentially, Lemma \ref{lem:init_lemma} serves as a proxy for a matrix completion step after the initial sampling stage and allows us to apply the distributional guarantees from \cite{uq_matcomp}.  Specifically, \cite[Theorem 2]{uq_matcomp} tells us that as long we sample $O(n \polylog{n})$ entries of the rank-$1$ matrix $\Y$, uniformly at random, the output of any matrix completion algorithm satisfies
    \begin{align}\label{eq:uq_assu}
    \hat{\Y}_{i,j} \sim \mathcal{N}(\Y_{i,j}, \frac{1}{\sqrt{n}}(|\u_i|^2 +  |\v_j|_2^2)),
    \end{align}    
    with probability at least $1 - n^{-3}$. In this paper,  we continue with the following assumption instead \footnote{Although \cite{uq_matcomp} assumes an $\epsilon$-accurate recovery, following the matrix completion step, we conjecture that the upper bound of Lemma \ref{lem:init_lemma} suffices to invoke \cite{uq_matcomp}, albeit with larger residual terms and lower probability of success.} 
    
    \begin{assumption}[Uncertainty Based Sampling]\label{prop:uncertainty}
    Under the conditions of Theorem \ref{thm:main_result},  choosing the next sample, $(i_t, j_t)$ at the $t$-th iteration chooses the location with maximum entropy. Furthermore, \eqref{eq:uq_assu} holds with high probability even after replacing $\u$ (and $\v$) with $\hat\u^t$ (and $\hat\v^t$).
    \end{assumption}


We now consider the  correctness of the sequential sampling step. The proof relies on a careful application of the regret bounds for the multi-armed Bandit problem wherein each arm has an unknown (and possibly distinct) variance \cite{mab_unknown_var}. Observe that the matrix completion step (Line 9 of \algoname{}) induces mutual independence in the estimates, $\Yhat^t$ across time. This directly follows from the fact that matrix completion algorithms typically begin with a {\em random Gaussian initialization}, followed by a refinement procedure. Additionally, we observe from \eqref{eq:uq_assu} that for any given $t$, each row (and column) is nearly statistically independent\footnote{This stems from the variance expression plus the fact that we do not impose any stochastic assumption on $\Y$.} of all others. The last key insight is that selecting the $t$-th sequential sample, $(i_t, j_t)$ is equivalent to selecting {\em an optimal} row, $i_t$ and a column $j_t$, which is further equivalent to playing two instances of the multi-armed Bandit problem -- one to select a row and the other to select a column. To see that at any given $t$, the two Bandit problems are independent, notice that  The above reason justifies the application of the regret bound, \cite[Theorem 1]{mab_unknown_var}. The remainder of the proof follows from careful algebraic manipulation. We provide the proof in the Appendix.

    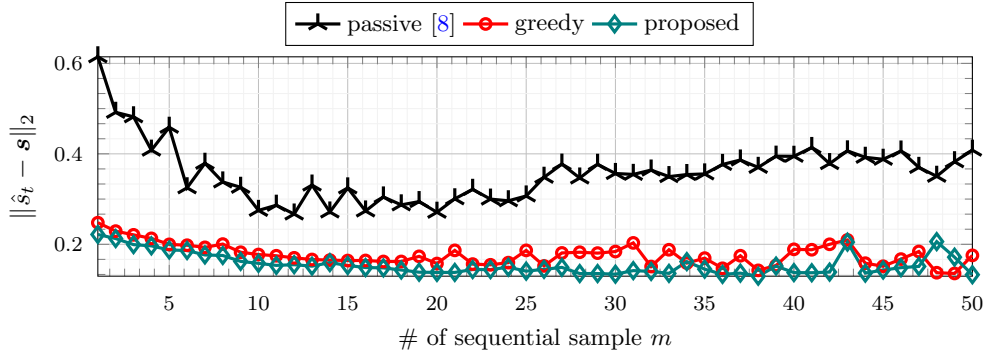
\begin{figure}[t!]
\centering
\begin{tikzpicture}
 \begin{groupplot}[
     group style={
        group size=1 by 1,
        x descriptions at=edge bottom,
        },
        enlargelimits=false,
        width = .8\linewidth,
        height=4.5cm,
        enlargelimits=false,
        grid=both,
    grid style={line width=.1pt, draw=gray!10},
    major grid style={line width=.2pt,draw=gray!50},
    minor x tick num=5,
    minor y tick num=5,
     ]
    
    \nextgroupplot[
    legend entries={
	passive \cite{umf_sourceloc},
	greedy,
    proposed,
	},
    legend style={at={(.75, 1.25)}},
    legend columns = 3,
    legend style={font=\small},
    xlabel={\small{\# of sequential sample $m$}},
    ylabel={{$\|\hat{s}_t - \bm{s}\|_2$}},
    title style={at={(0.5,-.35)},anchor=north,yshift=1},
    xticklabel style= {font=\footnotesize, yshift=-1ex},
    yticklabel style= {font=\footnotesize, xshift=-.5ex},
    ]
     \addplot [black, line width=1.2pt, mark=Mercedes star,mark size=4pt] table[x index = {0}, y index = {1}]{\activepassivegaussnointerp};
    \addplot [red, line width=1.2pt, mark=o,mark size=2pt] table[x index = {0}, y index = {2}]{\activepassivegaussnointerp};
    \addplot [teal, line width=1.2pt, mark=diamond,mark size=3pt] table[x index = {0}, y index = {3}]{\activepassivegaussnointerp};

\end{groupplot}
\end{tikzpicture}
\vspace*{-0.1in}
\caption{Source localization error with respect to the number of sequential samples observed.}
\label{fig:comp_active_passive}
\end{figure}

    This general result can be simplified for the special cases of Gaussian and Laplacian fields. 
    
     \begin{corollary}[Special Cases]\label{cor:special_case}
        \begin{enumerate} 
        \item If the energy field is considered to be a Gaussian function with variance parameter $\sigma$, the error, $\ep\left[ \sum_{\tau=1}^m \| \hat\s_{\tau} - \s^\ast\|_2^2 \right]$ is bounded as
        \begin{align}
             \sum_{k,l} C \sigma^2 \log^2m \frac{\|\bm c_{k,l} - \bm c^\ast\|^2}{\exp(-\frac{\|\bm c_{k,l} - \bm c^\ast\|^2}{2n\sigma^2})}
        \end{align}
        \item If the energy field is considered to be a Laplacian function with scale parameter $\gamma$, the average error, $\ep\left[ \sum_{\tau=1}^m \| \hat\s_{\tau} - \s^\ast\|_2^2 \right]$ is bounded by
        \begin{align}
            \sum_{k,l} C \gamma \log^2m \frac{\|\bm c_{k,l} - \bm c^\ast\|^2}{\exp(-\frac{\|\bm c_{k,l} - \bm c^\ast\|}{2n\gamma})}
        \end{align}
        \end{enumerate}
    \end{corollary}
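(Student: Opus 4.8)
The plan is to specialize the general bound of Theorem~\ref{thm:main_result} by plugging in the explicit forms of $h(\cdot,\cdot)$ for the Gaussian and Laplacian energy fields, and then simplifying the resulting expressions for $\Delta^u_{k|l}$, $\Delta^v_{l|k}$, $\gamma^u_{k,l}$, $\gamma^v_{k,l}$, and the rank-$1$ singular vectors $\u,\v$. First I would write $h(\x,\s^\ast)$ explicitly: for the Gaussian field, $h(\x_{i,j},\s^\ast) = A\exp\!\big(-\|\x_{i,j}-\s^\ast\|^2/(2\sigma^2)\big)$, and for the Laplacian field, $h(\x_{i,j},\s^\ast) = A\exp\!\big(-\|\x_{i,j}-\s^\ast\|/\gamma\big)$, where the spatial coordinates relate to grid indices via $\x_{i,j} = \tfrac{L}{n}(i - n/2, j - n/2)^\intercal$ up to the centering shift already fixed in the problem setup, so that $\|\x_{k,l}-\s^\ast\|^2$ is proportional to $\|\bm c_{k,l}-\bm c^\ast\|^2$ with a $1/n$-type factor absorbed into the $\sigma$ and $\gamma$ parameters (this is where the $n\sigma^2$ and $n\gamma$ appear in the denominators).

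Next I would compute the row/column gaps. Because the matrix $\bm H(\s^\ast)$ is (approximately) rank one, $\Y_{k,l} \approx \sigma_y^2 \u_k \v_l$, and with a separable kernel like the Gaussian this is in fact exact: $\exp(-(\|\x_{k,l}-\s^\ast\|^2)/(2\sigma^2)) = \exp(-(k-i^\ast)^2 \alpha)\exp(-(l-j^\ast)^2\alpha)$ factors cleanly, so $\u_k \propto \exp(-(k-i^\ast)^2\alpha)$ and similarly for $\v$. Then $\Delta^u_{k|l} = \Y_{i^\ast,l} - \Y_{k,l} = \sigma_y^2\v_l(\u_{i^\ast} - \u_k)$, and likewise $\Delta^v_{l|k} = \sigma_y^2\u_k(\v_{j^\ast}-\v_l)$. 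The product $\frac{\gamma^u_{k,l}}{(\Delta^u_{k|l})^2}\cdot\frac{\gamma^v_{k,l}}{(\Delta^v_{l|k})^2}$ then becomes a ratio whose leading behavior, after using $1 - e^{-x} \asymp \min(x,1)$ and collecting the dominant exponential term, is $\asymp \sigma^2 / \exp(-\|\bm c_{k,l}-\bm c^\ast\|^2/(2n\sigma^2))$ (Gaussian) and $\asymp \gamma / \exp(-\|\bm c_{k,l}-\bm c^\ast\|/(2n\gamma))$ (Laplacian). Multiplying by the $\|\bm c_{k,l}-\bm c^\ast\|^2 \log^2 m / m$ factor from Theorem~\ref{thm:main_result} and by $m$ (to pass from the time-averaged error to $\ep[\sum_\tau \|\hat\s_\tau - \s^\ast\|^2]$) yields the stated corollary, absorbing all numerical and $b$-dependent factors into $C$.

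I would carry out the Gaussian case first in full detail and then note that the Laplacian case is identical except that $\|\x-\s^\ast\|^2/(2\sigma^2)$ is replaced throughout by $\|\x-\s^\ast\|/\gamma$, which is why the denominator exponent loses its square and why the prefactor becomes $\gamma$ rather than $\sigma^2$ (the Laplacian kernel is not separable in the coordinates, so here I would use the rank-one approximation from the unimodal structure and the fact that $\|\x_{k,l}-\s^\ast\| \le |x_k - x_{i^\ast}| + |y_l - y_{j^\ast}|$ together with a matching lower bound up to a constant to retain separability up to constants — this is the one place where constants, not exact equalities, enter).

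The main obstacle I expect is controlling the gap terms $\Delta^u_{k|l}$ and $\Delta^v_{l|k}$ when $(k,l)$ is close to the mode $(i^\ast,j^\ast)$: there the gaps vanish and the raw bound blows up, so one must check that the competing factor $\|\bm c_{k,l}-\bm c^\ast\|^2$ in the numerator (together with the fact that $1-e^{-x}\ge cx$ for small $x$, which cancels one power of the distance) keeps each summand bounded, and that the $\gamma^u,\gamma^v$ numerators — which themselves contain a $\Delta$ term — do not spoil this cancellation. A careful near-mode expansion showing the summand is $O(\sigma^2\log^2 m)$ (resp.\ $O(\gamma\log^2 m)$) uniformly, matching the claimed form with the convention that the $k=i^\ast,l=j^\ast$ term is interpreted as its limiting value, is the crux; the far-from-mode regime is then routine since the exponential denominator dominates.
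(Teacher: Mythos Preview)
The paper does not actually supply a proof of this corollary: it is stated immediately after Theorem~\ref{thm:main_result} as a direct specialization, and the appendix (which proves Lemma~\ref{lem:init_lemma_all} and Theorem~\ref{thm:main_result}) never returns to it beyond the promissory sentence ``We conclude by proving Corollary~\ref{cor:special_case}.'' Your plan --- instantiate $h(\cdot,\cdot)$ with the Gaussian and Laplacian kernels, compute the gaps $\Delta^u_{k|l},\Delta^v_{l|k}$ and the factors $\gamma^u_{k,l},\gamma^v_{k,l}$ explicitly, then simplify the product in the bound of Theorem~\ref{thm:main_result} --- is precisely the intended (and only reasonable) route, so in that sense your approach matches the paper's.

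Where you go \emph{beyond} the paper is in flagging and handling the near-mode singularity (the $(\Delta^u_{k|l})^{-2}(\Delta^v_{l|k})^{-2}$ blow-up as $(k,l)\to(i^\ast,j^\ast)$) and in noting that the Laplacian kernel is not separable so a rank-one surrogate via the triangle inequality is needed. The paper is silent on both points; your treatment of them is the correct way to make the specialization rigorous. One small caution: your claimed cancellation near the mode relies on $1-e^{-x}\gtrsim x$ giving back \emph{two} powers of distance in each of the squared gaps, i.e.\ four powers total, against only two in the numerator $\|\bm c_{k,l}-\bm c^\ast\|^2$ --- so the summand is not obviously bounded at the mode unless the $\gamma^u_{k,l},\gamma^v_{k,l}$ numerators supply the missing factors or one accepts the corollary's bound as formal (with the mode term excluded or interpreted as a limit). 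The paper does not resolve this either, so your identification of it as ``the crux'' is apt.
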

    
    \textbf{Discussion:} Notice from Corollary \ref{cor:special_case} that for both Gaussian and Laplacian fields,  the upper bound is positively correlated with the signal spread ($\sigma^2$ for Gaussians and $\lambda$ for Laplacians).
         The reason is that the maximum-entropy-based sampling ensures that the peak is sampled with high probability for the low spread scenario; this trend is validated in our 
          numerical experiments (see Fig. \ref{fig:varcomp}).  Furthermore, our error upper bounds are higher for the Gaussian energy fields (it is not easy to exactly compare the two settings)\footnote{this follows since the summands scale as $x^2/\exp(-x^2)$ vs $x^2/\exp(-x)$}. This has been observed in previous work \cite{active_target1, umf_sourceloc} as well. We corroborate this trend through experiments in (see Fig \ref{fig:comp_active_passive}). 


    
    

    \begin{figure}[t!]
\centering
\begin{tikzpicture}
 \begin{groupplot}[
     group style={
        group size=2 by 1,
        horizontal sep=2cm,
        vertical sep=1cm,
        x descriptions at=edge bottom,
        },
        enlargelimits=false,
        width = .4\linewidth,
        height=4.5cm,
        enlargelimits=false,
        grid=both,
    grid style={line width=.1pt, draw=gray!10},
    major grid style={line width=.2pt,draw=gray!50},
    minor x tick num=5,
    minor y tick num=5,
     ]
    
    \nextgroupplot[
    legend entries={
	passive ($m = 80$),
	passive ($m=100$),
	\textbf{proposed} ($m=80$),
    \textbf{proposed} ($m=100$),
	},
    legend style={at={(1.1, 1.34)}},
    legend columns = 2,
    legend style={font=\footnotesize},
    xlabel={\small{scale parameter of Laplacian energy, $\gamma$}},
    ylabel={{$\|\hat{\s} - \s^\ast\|_2$}},
    title style={at={(0.5,-.35)},anchor=north,yshift=1},
    xticklabel style= {font=\footnotesize, yshift=-1ex},
    yticklabel style= {font=\footnotesize, xshift=-.5ex},
    ]
        \addplot [teal, line width=1.2pt, mark=o,mark size=3pt] table[x index = {0}, y index = {4}]{\varcomplaplaciannointerp};
     \addplot [olive, line width=1.2pt, mark=square,mark size=3pt] table[x index = {0}, y index = {3}]{\varcomplaplaciannointerp};
    \addplot [red, line width=1.2pt, mark=diamond,mark size=3pt] table[x index = {0}, y index = {1}]{\varcomplaplaciannointerp};
    \addplot [black, line width=1.2pt, mark=Mercedes star,mark size=3pt] table[x index = {0}, y index = {2}]{\varcomplaplaciannointerp};
    
    \nextgroupplot[
	legend entries={
	passive \cite{umf_sourceloc},
	greedy,
	\textbf{proposed},
	},
    legend style={at={(1, 0.6)}},
    legend columns = 1,
    legend style={font=\footnotesize},
	xlabel={\small{noise standard deviation, $\sigma_n$}},
    ylabel={{$\|\hat{\s} - \s^\ast\|_2$}},
    title style={at={(0.4,-.35)},anchor=north,yshift=1},
    xticklabel style= {font=\footnotesize, yshift=-1ex},
    yticklabel style= {font=\footnotesize, xshift=-.5ex},
    ]
     \addplot [black, line width=1.2pt, mark=square,mark size=3pt] table[x index = {0}, y index = {1}]{\noisecompgaussnointerp};
    \addplot [red, line width=1.2pt, mark=o,mark size=3pt] table[x index = {0}, y index = {2}]{\noisecompgaussnointerp};
    \addplot [blue, line width=1.2pt, mark=diamond,mark size=4pt] table[x index = {0}, y index = {3}]{\noisecompgaussnointerp};
    
\end{groupplot}
\end{tikzpicture}
\vspace*{-0.1in}
\caption{\textbf{Left:} Source localization error with respect to the scale parameter of the Laplacian energy field. Source localization error with respect to the noise standard deviation, $\sigma_n$}
\label{fig:varcomp}
\vspace*{-0.1in}
\end{figure}
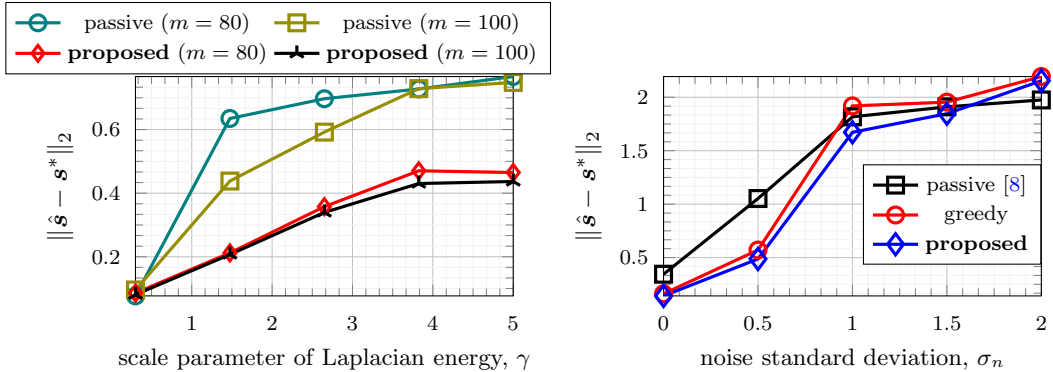

    \section{Numerical Results}
  All experiments are averaged over $100$ independent trials. The codes are available at \url{https://github.com/praneethmurthy/active-localization}.

  \textbf{Effectiveness of Active Sampling Strategy.}
  We first illustrate that the proposed method indeed serves as an effective sampling strategy for the source localization problem. We generate the data as follows: we set the target area, $\mathcal{A} := [-L/2, L/2] \times [-L/2, L/2]$ with $L = 5$. We discretize $\mathcal{A}$ into a $n \times n$ equally spaced grid  with $n = 100$. We consider a single source placed at $\s^\ast := (2,3)^\intercal$. For the energy emitted by the source, in this paper, we consider two cases: (a) Gaussian energy field, i.e., 
      $h(\x, \s^\ast) =  C \exp(-\|\s^\ast - \x\|_2^2/2\sigma^2)$ with variance $\sigma^2$ and (b) Laplacian energy field, i.e., 
      $h(\x, \s^\ast) =  C \exp(-\|\s^\ast - \x\|_1/\lambda)$  with  scale parameter $\lambda$. For the first experiment, we consider the Gaussian setting with $\sigma^2 = 1$, and do not consider additional measurement noise. 
  
  
We compare the proposed method with the state-of-the-art passive source localization approach \cite{umf_sourceloc} and a {\em greedy} baseline that directly exploits the unimodality property of the energy matrix. Concretely, the greedy baseline is essentially Algorithm \ref{algo:active_ucb} without considering the last two terms in Line 7. We implement \cite{umf_sourceloc}  and in each sequential sampling stage, we draw a sample uniformly at random 

We implement \algoname{} as follows. We start with a Latin Squares initialization that queries the energy measurement at $n = 100$ locations. We employ the same method for the greedy baseline as well as the proposed approach. We set the total number of sequential samples, $m = 50$ and plot the localization error of all methods with respect to the number of sequential samples observed. The results are shown in Fig. \ref{fig:comp_active_passive}. Notice that both the Greedy, and the proposed methods perform significantly better than \cite{umf_sourceloc}. Furthermore, the proposed approach performs better than the Greedy approach.  


\textbf{Evaluating Model Parameters.} Next, we illustrate the effect of varying the parameter of the energy field. In particular, we consider the Laplacian setting, and vary the scale parameter. We generate the data exactly as done in the previous experiment. We implement the baseline, \cite{umf_sourceloc}, and \algoname{} with $m = 100$, and illustrate the results in Fig. \ref{fig:varcomp}. We observe that (i) even in the Laplacian setting \algoname{} outperforms the baseline; (ii) as the scale parameter increases, i.e., the steepness of the field reduces, the localization increases. This is in accordance with Corollary \ref{cor:special_case}. We observed similar trends for the Gaussian field as well. 

\textbf{Comparison with respect to Additional noise.} Finally, we investigate the robustness of \algoname{} to additional noise. We generate the data as done in the first experiment, but add independent Gaussian noise with zero-mean and variance $\sigma_n^2$. We implement the passive baseline \cite{umf_sourceloc}, the greedy method, and \algoname{}. We notice that when noise is very high, the passive method is better, but in low-noise regimes, \algoname{} is better. Furthermore, \algoname{} is more robust to noise than the greedy approach as expected.

\section{Conclusions and Future Work}
In this work, we studied the problem of non-parametric active single source localization. We proposed \algoname, an active sampling algorithm that enjoys a error bound that scales as $O(\log^2m/m)$ for $m$ sequential samples. Experimentally, we showed that the proposed method performs well across several different types of data and is robust to noise. As part of future work, we will consider (i) dealing with multiple sources; (ii) development of complete convergence guarantees for the proposed method; and (iii) development of a more efficient algorithm that does not involve explicit matrix completion.

  \appendices
  \renewcommand\thetheorem{\arabic{section}.\arabic{theorem}}
  \counterwithin{theorem}{section}

  \section{Proof Sketch: Details}
  In this section, we provide the proof of the main result, Theorem \ref{thm:main_result}. To this end, we first prove 
  a generalized version of the initialization result, 
  Lemma \ref{lem:init_lemma}, followed by the sequential sampling result. 
 
    
    \begin{lemma}[Latin Squares Initialization]\label{lem:init_lemma_all}
    Consider the true energy matrix, $\Y$. With probability at least $1 - n^{-10}$ 
    \begin{enumerate}
        \item the error bound for the matrix is
    \begin{align}
        \|\Y - \Y_{\Omega_{\init}}\|_2 \leq 1.01 (1-1/n) \|\u\|_1 \|\v\|_2
    \end{align}
    
    \item the error bound for the subspaces is 
    \begin{align}
        ({\u}^\intercal \tilde{\u}_0)^2 + ({\v}^\intercal\tilde{\v}_0)^2 \leq \frac{2 \|\Y - \Y_{\Omega}\|_2^2}{(\|\Y\| - \|\Y - \Y_{\Omega}\|)^2}
    \end{align}
    where $\tilde{\u}_0$ and $\tilde{\v}_0$  are the top-$1$ left, and right singular vectors of $\Y_{\Omega}$ repsectively. 
    \end{enumerate}
    \end{lemma}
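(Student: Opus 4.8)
The plan is to prove the two claims in order, as Claim~2 is a deterministic singular-vector perturbation statement that only needs the spectral-norm bound of Claim~1. For Claim~1, write $\Y=\sigma_y^2\u\v^\intercal$ with $\|\u\|_2=\|\v\|_2=1$ and identify the output of \textsf{Latin-Squares} with the graph $\{(i,\pi(i)):i\in[n]\}$ of a uniformly random permutation $\pi\in S_n$ (one cell per row and column, i.e.\ one symbol class of a random Latin square). Then $\Y_{\Omega_{\init}}=D_\pi:=\sigma_y^2\sum_{i=1}^n\u_i\v_{\pi(i)}\,\e_i\e_{\pi(i)}^\intercal$ is a generalized permutation matrix, and the residual is $\Y-\Y_{\Omega_{\init}}=\Y-D_\pi$. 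Since $\ep[\v_{\pi(i)}\,\e_i\e_{\pi(i)}^\intercal]=\tfrac1n\e_i\v^\intercal$, one has $\ep[D_\pi]=\tfrac1n\Y$, hence $\ep[\Y-\Y_{\Omega_{\init}}]=(1-\tfrac1n)\Y$, whose spectral norm is exactly $(1-\tfrac1n)\sigma_y^2\le(1-\tfrac1n)\sigma_y^2\|\u\|_1\|\v\|_2$ since $\|\u\|_1\ge\|\u\|_2=1$. It then suffices to show the fluctuation $D_\pi-\ep[D_\pi]$ is negligible in spectral norm.

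Bounding that fluctuation is where the real work, and the main obstacle, lies. Each of the $n$ rank-one summands has operator norm at most $\sigma_y^2\|\u\|_\infty\|\v\|_\infty$ and the matrix variance statistic is of order $\sigma_y^4\max(\|\u\|_\infty^2,\|\v\|_\infty^2)/n$, so a Bernstein/Freedman-type bound gives $\|D_\pi-\ep[D_\pi]\|_2$ of order $\sigma_y^2\sqrt{\log n/n}\,\max(\|\u\|_\infty,\|\v\|_\infty)+\sigma_y^2\|\u\|_\infty\|\v\|_\infty\log n$ with probability at least $1-n^{-10}$; combined with the triangle inequality and the cruder deterministic estimate $\|\Y-\Y_{\Omega_{\init}}\|_2\le\sigma_y^2(1+\|\u\|_\infty\|\v\|_\infty)$ for highly-spiky components, the $1.01$-slack absorbs the gap between $(1-\tfrac1n)$ and $1$ for $n$ large, and Claim~1 follows. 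The obstacle flagged in the sketch is that the $n$ summands are \emph{not} independent, because $\pi$ is a single permutation (column sampling without replacement). I would resolve this by generating $\pi$ via Fisher--Yates, which turns the column choices into a martingale-difference sequence so that the partial sums of $D_\pi-\ep[D_\pi]$ form a matrix martingale with bounded increments to which a matrix Freedman inequality applies; alternatively one couples $\pi$ to $n$ i.i.d.\ with-replacement column draws, applies the plain Matrix Bernstein inequality of \cite{tropp}, and charges the $O(1/n)$ expected number of collisions to the slack. Controlling the predictable quadratic variation (respectively the collision correction) without losing a $\sqrt{n}$ factor is the delicate step; the remainder is bookkeeping.

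Claim~2 is then a deterministic corollary via Wedin's $\sin\theta$ theorem. View $\Y_\Omega$ as a perturbation of the rank-one $\Y$; since $\sigma_2(\Y)=0$, Weyl's inequality gives $\sigma_2(\Y_\Omega)\le\|\Y-\Y_\Omega\|_2$, so the singular-value gap for the leading triple is at least $\delta:=\|\Y\|_2-\|\Y-\Y_\Omega\|_2$. A slightly finer reading of the Claim~1 analysis — that in spectral norm the residual is dominated by its mean $(1-\tfrac1n)\Y$ rather than by the loose $\|\u\|_1$ bound — gives $\|\Y-\Y_\Omega\|_2\le\big(1-\Theta(\tfrac1n)\big)\|\Y\|_2$, so $\delta>0$. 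Wedin then bounds the principal angles by $\sqrt{1-(\u^\intercal\tilde{\u}_0)^2}\le\|\Y-\Y_\Omega\|_2/\delta$ and $\sqrt{1-(\v^\intercal\tilde{\v}_0)^2}\le\|\Y-\Y_\Omega\|_2/\delta$; squaring and adding these two inequalities yields the right-hand side of Claim~2 (so its left-hand side is read as the sum of squared sines of these principal angles). The only point needing care is the positivity of $\delta$, handled as above; everything else is Weyl and Wedin applied to a rank-one matrix.
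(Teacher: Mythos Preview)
Your route to Item~1 differs from the paper's. The paper does \emph{not} center the residual: it writes $\Y-\Y_\Omega=\sum_{i,j}\phi_{i,j}\,\e_i\bar u_i\bar v_j\e_j^\intercal$ with $\phi_{i,j}=\mathds 1\{(i,j)\in\Omega^c\}$, then \emph{fixes a row} $i$, treats the indicators $\{\phi_{i,j}\}_{j\in[n]}$ as independent Bernoulli$(1-1/n)$ variables, applies Matrix Bernstein directly to the row block $\sum_j\X_{i,j}$ to obtain $\bigl\|\sum_j\X_{i,j}\bigr\|\le 1.01(1-\tfrac1n)|\bar u_i|\,\|\bar{\v}\|_2$ with probability $1-n^{-11}$, and finally combines the rows via the triangle inequality and a union bound---which is precisely how the $\|\u\|_1$ factor enters the final estimate. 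Your approach instead splits into the deterministic mean $(1-\tfrac1n)\Y$ and the fluctuation $D_\pi-\ep[D_\pi]$, handling the latter by a global matrix-martingale argument (Fisher--Yates plus Freedman) or a with-replacement coupling. What you gain is an honest treatment of the permutation dependence---the paper's within-row ``independence'' is not literally true, since for fixed $i$ exactly one $\phi_{i,j}$ vanishes and the rest equal one---and a fluctuation that is genuinely small in the incoherent regime. What the paper's row-by-row scheme gains is that $\|\u\|_1$ falls out structurally from summing the per-row bounds, whereas you have to give it away at the end via $\|\u\|_1\ge\|\u\|_2=1$. One residual concern on your side: the Freedman increment term is of order $\sigma_y^2\|\u\|_\infty\|\v\|_\infty\log n$, which in spiky regimes need not sit under the $0.01(1-\tfrac1n)\sigma_y^2\|\u\|_1$ slack, and your deterministic fallback $\sigma_y^2(1+\|\u\|_\infty\|\v\|_\infty)$ does not obviously either; the case split you allude to would have to be made quantitative.

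For Item~2 your argument coincides with the paper's: both invoke Wedin's $\sin\theta$ theorem, using Weyl's inequality to get $\sigma_2(\Y_\Omega)\le\|\Y-\Y_\Omega\|$ and hence the gap $\delta=\|\Y\|-\|\Y-\Y_\Omega\|$. Your extra observation---that $\delta>0$ actually requires the tighter mean-dominated estimate on the residual rather than the loose $\|\u\|_1$ version of Claim~1---is a point the paper leaves implicit.
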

    

\newcommand{\baru}{\bar{\u}}
\newcommand{\barv}{\bar{\v}}

    \begin{proof}[Proof of Lemma \ref{lem:init_lemma_all}]

    \textbf{Proof of Item 1:} Define $\baru := \sigma_y \u$  and $\barv := \sigma_y \v$.  For simplicity, we drop the subscript $\init$ in this proof. Notice that
    \begin{align*}
        \|\Y - \Y_{\Omega}\|_2 = \|\baru\v^\intercal - (\baru\v^\intercal)_{\Omega} \|_2 
        = \norm{\sum_{i,j} \phi_{i,j} \e_i \baru_i \barv_j \e_j^\intercal }
    \end{align*}
    where, $\e_i$ is the $i$-th canonical basis vector in $\R^n$ and 
    we use $\phi_{i,j} 1$ if $(i,j) \in \Omega^c$ and $0$ otherwise. 
    Although these matrices are not independent, notice that for a fixed $i$, the $\phi_{i,j}$ are independent Bernoulli r.v.'s. For a fixed $i$, we can apply Matrix Bernstein \cite{tropp}. 
    Let $\X_{i,j} := \phi_{i,j} \e_i \baru_i \barv_j \e_j^\intercal$. We have 
    \begin{align*}
        \|\ep[\X_{i,j}] \| &= \ep[\phi_{i,j}]\|\e_i \baru_i \barv_j \e_j\| = (1-1/n) \cdot \baru_i \barv_j \\
        &\implies \max_{j} \|\ep[\X_{i,j}]\| \leq (1-1/n) \baru_i \barv_{j^\ast} := R_i
    \end{align*}
    
    We can also bound the  ``variance parameter''
    \begin{align}\label{eq:sig_i_one}
        \norm{\sum_j \ep[\X_{i,j} \X_{i,j}^\intercal]} 
        &= \left(1 - \frac{1}{n} \right) \baru_i^2 \|\barv\|^2
    \end{align}
where the last line follows from the observation that the matrix is a matrix with just one non-zero element at the $(i,i)$-location. Furthermore, we have 
    \begin{align}\label{eq:sig_i_two}
        \norm{\sum_j \ep[\X_{i,j}^\intercal \X_{i,j}]} 
        &= \left(1 - \frac{1}{n} \right) \baru_i^2 \barv_{j^\ast}^2
    \end{align}
where the last line follows from the observation that the spectral norm of a diagonal matrix is the value of the largest element, and the unimodality property of $\barv$. From these, we get, 
\begin{align*}
\max\left\{ \norm{\sum_j \ep[\X_{i,j} \X_{i,j}^\intercal]},   \norm{\sum_j \ep[\X_{i,j}^\intercal \X_{i,j}]} \right\} = \left(1 - \frac{1}{n} \right) \baru_i^2 \|\barv\|^2 := \sigma_i^2
\end{align*}
Applying the matrix Bernstein for a fixed $i$, we have 
\begin{align*}
          \Pr\left( \norm{\sum_{j} \X_{i,j}} \geq \norm{\sum_j \ep[\X_{i,j}]} + s \right) 
          \leq 2n \exp\left( \frac{-s^2/2}{\sigma_i^2 + R_is/3} \right)
      \end{align*}
    now, picking $s := \sqrt{\frac{0.01 \cdot 12 \log n}{2 (1-1/n) \baru_i}} (1-1/n) \baru_i \|\barv\|$ gives us that 
    \begin{align*}
        \Pr\left( \norm{\sum_j \phi_{i,j} \e_i \baru_i \barv_j \e_j^\intercal} \geq 1.01 (1-1/n) \baru_i \|\barv\| \right) \leq n^{-11}
    \end{align*}
    A union bound over all $i$ concludes the proof. 
    
    \textbf{Proof of Item 2:}
    The proof follows form the result of Item $1$ followed by an application of Wedin's Theorem \cite{wedin}.
    Define 
    \begin{align*}
        \Y_{\Omega} &\overset{SVD}{=} \sum_{i=1}^n \tilde{\sigma_{0,i}} \tilde{\u}_{0,i}  \tilde{\u}_{0,i}^\intercal = \sigma_{0,1} \tilde{\u}_{0,1} \tilde{\v}_{0,1}^\intercal +  \sum_{i=2}^n \tilde{\sigma_{0,i}} \tilde{\u}_{0,i}  \tilde{\u}_{0,i}^\intercal := \tilde{\u}_0 \tilde{\sigma}_0 \tilde{\v}_0^\intercal + \tilde{\U}_{0, \perp} \tilde{\bm{\Sigma}}_{0,\perp} \tilde{\V}_{0,\perp}^\intercal
    \end{align*}
    observe that the deviation parameter, $\delta := \min\{ \min_{1 \leq i \leq r, r \leq j \leq n_2} |\sigma_i - \tilde{\sigma}_j|, \min_{1 \leq i \leq r} \sigma_i \} = \min\{ \sigma_y^2 - \|\bm{\Sigma}_{0,\perp}\|, \sigma_y^2\} = \sigma_y^2 - \|\bm{\Sigma}_{0,\perp}\|_2 > 0$. Thus, 
    \begin{align}
        (\u^\intercal \tilde{\u}_0)^2 + ({\v}^\intercal\tilde{\v}_0)^2 \leq \frac{\|{\u}^\intercal(\Y - \Y_{\Omega})\|_2^2 + \|{\v}^\intercal(\Y - \Y_{\Omega})\|_2^2 }{\delta^2}
    \end{align}
    using triangle inequality on the numerator terms and, Weyl's inequality completes the proof. 
    \end{proof}



    \begin{proof}[Proof of Theorem \ref{thm:main_result}]
        First, for the sake of simplicity, assume that the left singular value estimates have the same distribution at all time $t$, i.e,
        $\hat{\u}^t \overset{i.i.d}{\sim} \mathcal{N}(\bm\mu_u, \bm\Sigma_{u})$ with  $\bm\Sigma_u$ diagonal. 
        
        Define $T_{k}^u(m)$ denote the number of times the $k$-th row is selected after $m$ sequential samples. 
        \cite[Theorem 2]{mab_unknown_var} tells us that the (expected) number of times the ``wrong row'' 
        is selected can be bounded as 
        \begin{align}
            \ep[\hat T_k^u(m)] \leq C \left( \frac{(\bm\Sigma_u)_{k,k} + 2b (\Delta^u_k)}{(\Delta^u_k)^2} \right) \log m 
        \end{align}
        where $b$ is the bound on the energy field and $\Delta_k^u := \bm\mu_u^\ast - (\bm\mu_u)_{k}$ is the sub-optimality gap of the $k$-th row.
        Similarly, assuming that the right singular value estimates have the same distribution at all time $t$, i.e,
        $\hat{\v}^t \overset{i.i.d}{\sim} \mathcal{N}(\bm\mu_v, \bm\Sigma_{v})$ with  $\bm\Sigma_v$ diagonal, the number of the $l$-th colum is selected after $m$ sequential samples can be bounded as 
        \begin{align}
            \ep[\hat T_l^v(m)] \leq C \left( \frac{(\bm\Sigma_v)_{l,l} + 2b \Delta^v_l}{(\Delta^v_l)^2} \right) \log m 
        \end{align}
        where $\Delta_l^v := \bm\mu_v^\ast - (\bm\mu_v)_{l}$ is the sub-optimality gap of the $l$-th column. Now consider the expected error function, 
        \begin{align*}
            \ep\left[\sum_{\tau=1}^m \|\hat{\s}_{\tau} - \s^\ast\|\right] &= \ep\left[\sum_{\tau = 1}^m \sum_{k,l} \mathds{1}_{\{i_\tau = k, j_{\tau} =l\}} (|k - i^\ast|^2 + |l - j^\ast|^2) \right] \\
            &\overset{(a)}{=} \sum_{k,l} \ep[T_k^u(m)]\ \ep[T_l^v(m)] (|k-i^\ast|^2 + |l-j^\ast|^2) \\
        \end{align*}
        where (a) follows from the definition of $T_k^u(m), T_l^v(m)$, re-ordering the summation, and using the near-independence result  \cite{uq_matcomp} to decompose the expected value completes the proof. 
    \end{proof}

\bibliographystyle{IEEEtran}
\bibliography{active_loc}    

\begin{thebibliography}{10}
\providecommand{\url}[1]{#1}
\csname url@samestyle\endcsname
\providecommand{\newblock}{\relax}
\providecommand{\bibinfo}[2]{#2}
\providecommand{\BIBentrySTDinterwordspacing}{\spaceskip=0pt\relax}
\providecommand{\BIBentryALTinterwordstretchfactor}{4}
\providecommand{\BIBentryALTinterwordspacing}{\spaceskip=\fontdimen2\font plus
\BIBentryALTinterwordstretchfactor\fontdimen3\font minus
  \fontdimen4\font\relax}
\providecommand{\BIBforeignlanguage}[2]{{%
\expandafter\ifx\csname l@#1\endcsname\relax
\typeout{** WARNING: IEEEtran.bst: No hyphenation pattern has been}%
\typeout{** loaded for the language `#1'. Using the pattern for}%
\typeout{** the default language instead.}%
\else
\language=\csname l@#1\endcsname
\fi
#2}}
\providecommand{\BIBdecl}{\relax}
\BIBdecl

\bibitem{sourceloc_review_sound}
M.~Cobos, F.~Antonacci, A.~Alexandridis, A.~Mouchtaris, and B.~Lee, ``A survey
  of sound source localization methods in wireless acoustic sensor networks,''
  \emph{Wireless Communications and Mobile Computing}, vol. 2017, 2017.

\bibitem{sourceloc_review_sound_dl}
P.-A. Grumiaux, S.~Kiti{\'c}, L.~Girin, and A.~Gu{\'e}rin, ``A survey of sound
  source localization with deep learning methods,'' \emph{arXiv preprint
  arXiv:2109.03465}, 2021.

\bibitem{sourceloc_review_network}
L.~Cheng, C.~Wu, Y.~Zhang, H.~Wu, M.~Li, and C.~Maple, ``A survey of
  localization in wireless sensor network,'' \emph{International Journal of
  Distributed Sensor Networks}, vol.~8, no.~12, p. 962523, 2012.

\bibitem{sourceloc_review_medical}
M.~A. Jatoi, N.~Kamel, A.~S. Malik, I.~Faye, and T.~Begum, ``A survey of
  methods used for source localization using eeg signals,'' \emph{Biomedical
  Signal Processing and Control}, vol.~11, pp. 42--52, 2014.

\bibitem{sourceloc1}
H.~C. So, ``Source localization: Algorithms and analysis,'' \emph{Handbook of
  Position Location: Theory, Practice, and Advances}, pp. 25--66, 2011.

\bibitem{sourceloc2}
A.~Beck, P.~Stoica, and J.~Li, ``Exact and approximate solutions of source
  localization problems,'' \emph{IEEE Transactions on signal processing},
  vol.~56, no.~5, pp. 1770--1778, 2008.

\bibitem{sourceloc3}
C.~Meesookho, U.~Mitra, and S.~Narayanan, ``On energy-based acoustic source
  localization for sensor networks,'' \emph{IEEE Transactions on Signal
  Processing}, vol.~56, no.~1, pp. 365--377, 2007.

\bibitem{umf_sourceloc}
J.~Chen and U.~Mitra, ``Unimodality-constrained matrix factorization for
  non-parametric source localization,'' \emph{IEEE Transactions on Signal
  Processing}, vol.~67, no.~9, pp. 2371--2386, 2019.

\bibitem{targetrecsurvey}
B.~Bhanu, ``Automatic target recognition: State of the art survey,'' \emph{IEEE
  transactions on aerospace and electronic systems}, no.~4, pp. 364--379, 1986.

\bibitem{targetest1}
S.~Sun, A.~P. Petropulu, and W.~U. Bajwa, ``Target estimation in colocated mimo
  radar via matrix completion,'' in \emph{2013 IEEE International Conference on
  Acoustics, Speech and Signal Processing}.\hskip 1em plus 0.5em minus
  0.4em\relax IEEE, 2013, pp. 4144--4148.

\bibitem{distillled_sense}
J.~Haupt, R.~M. Castro, and R.~Nowak, ``Distilled sensing: Adaptive sampling
  for sparse detection and estimation,'' \emph{IEEE Transactions on Information
  Theory}, vol.~57, no.~9, pp. 6222--6235, 2011.

\bibitem{active_target1}
S.~Choudhary and U.~Mitra, ``Analysis of target detection via matrix
  completion,'' in \emph{2015 IEEE International Conference on Acoustics,
  Speech and Signal Processing (ICASSP)}.\hskip 1em plus 0.5em minus
  0.4em\relax IEEE, 2015, pp. 3771--3775.

\bibitem{smg_active}
S.~Mak and Y.~Xie, ``Active matrix completion with uncertainty
  quantification,'' \emph{arXiv preprint arXiv:1706.08037}, 2017.

\bibitem{bayesian_mc_1}
L.~Yang, J.~Fang, H.~Duan, H.~Li, and B.~Zeng, ``Fast low-rank bayesian matrix
  completion with hierarchical gaussian prior models,'' \emph{IEEE Transactions
  on Signal Processing}, vol.~66, no.~11, pp. 2804--2817, 2018.

\bibitem{bayesian_mc_2}
P.~Alquier \emph{et~al.}, ``A bayesian approach for noisy matrix completion:
  Optimal rate under general sampling distribution,'' \emph{Electronic Journal
  of Statistics}, vol.~9, no.~1, pp. 823--841, 2015.

\bibitem{active_psd_mc}
A.~Bhargava, R.~Ganti, and R.~Nowak, ``Active positive semidefinite matrix
  completion: Algorithms, theory and applications,'' in \emph{Artificial
  Intelligence and Statistics}.\hskip 1em plus 0.5em minus 0.4em\relax PMLR,
  2017, pp. 1349--1357.

\bibitem{bandit_mc_1}
X.~Zhao, W.~Zhang, and J.~Wang, ``Interactive collaborative filtering,'' in
  \emph{Proceedings of the 22nd ACM international conference on Information and
  Knowledge Management}, 2013, pp. 1411--1420.

\bibitem{bandit_mc_2}
J.~Kawale, H.~H. Bui, B.~Kveton, L.~Tran-Thanh, and S.~Chawla, ``Efficient
  thompson sampling for online matrix-factorization recommendation,'' in
  \emph{Advances in neural information processing systems}, 2015, pp.
  1297--1305.

\bibitem{stoch_lowrank_bandits}
B.~Kveton, C.~Szepesv{\'a}ri, A.~Rao, Z.~Wen, Y.~Abbasi-Yadkori, and
  S.~Muthukrishnan, ``Stochastic low-rank bandits,'' \emph{arXiv preprint
  arXiv:1712.04644}, 2017.

\bibitem{oned_source_loc}
D.~Mokhasunavisu and U.~Mitra, ``Non-parametric active target localization:
  Exploiting unimodality and separability,'' in \emph{2017 55th Annual Allerton
  Conference on Communication, Control, and Computing (Allerton)}.\hskip 1em
  plus 0.5em minus 0.4em\relax IEEE, 2017, pp. 346--353.

\bibitem{bandit_algo}
T.~Lattimore and C.~Szepesv{\'a}ri, \emph{Bandit algorithms}.\hskip 1em plus
  0.5em minus 0.4em\relax Cambridge University Press, 2020.

\bibitem{latin_squares}
A.~D. Keedwell and J.~D{\'e}nes, \emph{Latin squares and their
  applications}.\hskip 1em plus 0.5em minus 0.4em\relax Elsevier, 2015.

\bibitem{uq_matcomp}
Y.~Chen, J.~Fan, C.~Ma, and Y.~Yan, ``Inference and uncertainty quantification
  for noisy matrix completion,'' \emph{Proceedings of the National Academy of
  Sciences}, vol. 116, no.~46, pp. 22\,931--22\,937, 2019.

\bibitem{tropp}
J.~A. Tropp, ``User-friendly tail bounds for sums of random matrices,''
  \emph{Foundations of computational mathematics}, vol.~12, no.~4, pp.
  389--434, 2012.

\bibitem{mab_unknown_var}
J.-Y. Audibert, R.~Munos, and C.~Szepesv{\'a}ri, ``Exploration--exploitation
  tradeoff using variance estimates in multi-armed bandits,'' \emph{Theoretical
  Computer Science}, vol. 410, no.~19, pp. 1876--1902, 2009.

\bibitem{wedin}
P.-{\AA}. Wedin, ``Perturbation bounds in connection with singular value
  decomposition,'' \emph{BIT Numerical Mathematics}, vol.~12, no.~1, pp.
  99--111, 1972.

\end{thebibliography}

\end{document}